\newtheorem{Theorem}{Theorem}
\newtheorem{Lemma}{Lemma}
\newtheorem{Claim}{Claim}
\newtheorem{Fact}{Fact}
\newcommand{\ceil}[1]{\lceil #1 \rceil}
\newcommand{\FACTOR}{\textsf{Con-$f$-F}}
\title{A Characterization for the Existence of Connected $f$-Factors of {\em Large} Minimum Degree\tnoteref{fn1}}
\begin{document}
\author[nsn]{N.~S.~Narayanaswamy\corref{cor1}}
\ead{swamy@cse.iitm.ac.in}
\author[nsn]{C.~S.~Rahul\corref{cor1}}
\ead{rahulcs@cse.iitm.ac.in}
\address[nsn]{Indian Institute of Technology Madras, Chennai, India 600036}
\tnotetext[fn1]{Supported by the Indo-German Max Planck Center for Computer Science grant for the year 2013-2014 in the area of Algorithms and Complexity}
\begin{abstract}
It is well known that when $f(v)$ is a constant for each vertex $v$, the connected $f$-factor problem is NP-Complete.   
In this note we consider the case when  $f(v) \geq \lceil \frac{n}{2.5}\rceil$ for each vertex $v$, where $n$ is the number of vertices. 
We present a diameter based characterization of graphs having a connected $f$-factor (for such $f$).  We show that if a graph $G$  has a connected $f$-factor and  an $f$-factor with 2 connected components, then it has a connected $f$-factor of diameter at least 3.  This result yields a polynomial time algorithm which first executes the Tutte's $f$-factor algorithm, and if the output has 2 connected components, our algorithm searches for a connected $f$-factor of diameter at least 3.
\end{abstract}
\begin{keyword}
Graph diameter \sep perfect matching \sep $f$-factor \sep alternating-circuits
\end{keyword}
\maketitle
\section{Introduction}
\label{sec:intro}
Given a simple undirected graph $G=(V,E)$, (see \citet{DW00}), on $n$ vertices, an $f$-factor  of $G$ is a spanning subgraph $H$ such that $d_H(v)=f(v)$ for every $v$ in $V$. The problem of deciding whether a given graph $G$ has an $f$-factor is a well studied problem over many years by \citet{An85,tutte1952factors,cornuejols1988general, liu2008toughness} and \citet{PM07}. \citet{WT54} has shown the problem to be polynomial time reducible to perfect matching.  Since perfect matching is polynomial time solvable it follows that testing for an $f$-factor is polynomial time solvable, and we refer to this algorithm as the {\em Tutte's $f$-factor algorithm}.
We consider the problem of finding a connected subgraph which is an  $f$-factor.
   For the case where $f(v) = 2$ for  every $v$ in $V$, a connected $f$-factor is a Hamiltonian cycle. It is well known that the Hamiltonian cycle problem is NP-Complete. In fact, \citet{CC90} have shown that for each constant $d$, when $f(v)=d$ for every $v$, the connected $f$-factor problem is NP-Complete. We refer this connected regular spanning subgraph as a connected $d$-factor. Further, recently \citet{BN13}  showed that for  every $0<\epsilon<1$, for $f(v) \geq n^{1-\epsilon}$ the connected $f$-factor problem is NP-Complete. 

 For the case when $f(v) \geq \ceil {\frac{n}{2}} -1$,  \citet{BN13}  have also shown a simple test for the existence of a connected $f$-factor. 
We observe that when $f(v)\geq \ceil{n/2}$ for every $v$ in $V$, any $f$-factor of a given graph $G$ is connected and has diameter at most 2. From the work by \citet{BN13} when $f(v)=\ceil{n/2}-1$, if the input graph has a $f$-factor with two components and {\em also} a connected $f$-factor, then there exists a connected $f$-factor of diameter 3. Similarly, for the case where $f(v) \geq \frac{n}{c}$ for each $v$ in $V$ and a constant $c$, the diameter of a connected $f$-factor is upper bounded by $3c-1$. This is because if we consider a diametric path $u_1,u_2,\cdots,u_{3c}$, the sets $N(u_i)$ and $N(u_{i+j})$ have to be disjoint for each $j \geq 3$. Further, each such set $|N(u_i)| \geq \frac{n}{c}$ and it is impossible to have such $c$ pairwise disjoint vertex sets. Thus for $f(v) \geq \frac{n}{c}$ for each $v$ in $V$, there seems to be a concrete  correspondence between the diameter of connected $f$-factors and the value $c$. We use this correspondence to come up with a polynomial time algorithm for $c=2.5$. It may be possible to extend the idea to solve the problem for any constant $c$. 


\noindent
{\bf Our Work.}
We present a diameter based characterization of graphs having connected $f$-factors where  $f(v) \geq\ceil{\frac{n}{2.5}}$ for every $v$ in $V$. For each such $f$, we refer to the connected $f$-factor problem as $\FACTOR$.   Our main result which we prove in Section \ref{sec:method} is as follows:
\begin{Theorem}
 Let $G$ be an instance of \FACTOR. If each connected $f$-factor of $G$ has  diameter at most 2, then every $f$-factor of $G$ is connected.
 \label{mainthm}
\end{Theorem}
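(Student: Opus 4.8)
I would prove the reformulation used in the abstract: \emph{if $G$ has a connected $f$-factor and also an $f$-factor with two components, then $G$ has a connected $f$-factor of diameter at least $3$}. This yields Theorem~\ref{mainthm}, since a disconnected $f$-factor of $G$ has \emph{exactly} two components — a component contains a vertex of degree $f(v)\ge\lceil n/2.5\rceil$ inside it, hence has more than $n/2.5$ vertices, and three such vertex sets would exceed $n$ — so under the theorem's hypothesis the existence of a disconnected $f$-factor would contradict the reformulation. Fix a $2$-component $f$-factor $H$ with component vertex sets $V_1,V_2$; then $n/2.5 < |V_i| < n-n/2.5 = 3n/5$ for $i=1,2$. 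Since every vertex of $H[V_i]$ has degree $f(v)>|V_i|/2$ inside $H[V_i]$, each $H[V_i]$ has diameter at most $2$ and is bridgeless: two non-adjacent vertices of $V_i$ have neighbourhoods in $V_i$ of total size exceeding $|V_i|$ and so share a neighbour, and a bridge of $H[V_i]$ would split $V_i$ into two sets each of size $>n/2.5$, impossible as $|V_i|<3n/5$. (For the same reason, deleting two edges from $H[V_i]$ leaves it connected.)

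Call an edge of a subgraph a \emph{crossing edge} if it has one endpoint in $V_1$ and one in $V_2$. The key reduction: a connected $f$-factor $J$ has diameter at least $3$ whenever some $x\in V_1$ and some $y\in V_2$ are incident to no crossing edge of $J$, because any $x$--$y$ path of $J$ must traverse a crossing edge $\{p,q\}$ with $p\ne x$ and $q\ne y$ and hence has length at least $3$. As $|V_1|,|V_2|>2$, it therefore suffices to exhibit a connected $f$-factor of $G$ with at most two crossing edges.

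Here is the intended construction. Let $H'$ be a connected $f$-factor of $G$; then $G$ has at least one crossing edge. If all crossing edges of $G$ are incident to one common vertex $w$, then $w$ is a cut vertex of $G$, hence of every connected $f$-factor $J$; since $f(w)\le|V_i|-1<3n/5-1<n-1$, in $J$ the vertex $w$ misses some vertex $z$, and then $z$ together with any vertex on the side of $w$ opposite to $z$ forms a pair at $J$-distance at least $3$ — done. Otherwise the crossing edges of $G$ use at least two distinct endpoints on each side, so one can pick crossing edges $\{a,b\}$ and $\{a',b'\}$ of $G$ with $a\ne a'$ in $V_1$ and $b\ne b'$ in $V_2$. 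Now take an alternating path $P_1$ in $H[V_1]$ from $a$ to $a'$ and an alternating path $P_2$ in $H[V_2]$ from $b$ to $b'$ — each of length $1$ or $3$, beginning and ending with an $H$-edge — and form the alternating circuit $Z=\{a,b\}\cdot P_2\cdot\{a',b'\}\cdot P_1$. Then $J:=H\,\triangle\,Z$ (delete the $H$-edges of $Z$, add the non-$H$-edges of $Z$) is an $f$-factor whose only crossing edges are $\{a,b\}$ and $\{a',b'\}$; it is connected because $H[V_1]\triangle P_1$ and $H[V_2]\triangle P_2$ stay connected (deleting at most two edges from each part does not disconnect it) and $\{a,b\}$ links the sides; so $\mathrm{diam}(J)\ge3$ by the reduction.

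\textbf{The main obstacle.} Everything above is routine except one point: guaranteeing, for \emph{some} admissible pair of crossing edges, that the short alternating connecting paths $P_1,P_2$ actually exist — i.e., that $a$ is $H$-adjacent to $a'$ or is joined to it by a two-edge $H$-path whose middle vertex lies on a non-$H$ edge of $G$ to an $H$-neighbour of $a'$, and likewise for $b,b'$. This is precisely where the threshold $n/2.5$ enters: for $v\in V_i$ the "deficiency" set $V_i\setminus N_H(v)$ has fewer than $3n/5-2n/5=n/5$ vertices, and for one vertex on each side the two deficiency sets have total size below $n/5$; the plan is to play these scarcity bounds against the fact that the crossing edges of $G$ are not all concentrated on a couple of vertices, iterating over the choice of the pair $\{a,b\},\{a',b'\}$ until the required non-$H$ edge is forced to appear. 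I expect pushing this counting through against the constant $2.5$ to be the delicate part of the argument; the structural facts, the reduction, and the verification that the swap produces a connected $f$-factor are all straightforward once it is available.
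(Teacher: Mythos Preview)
Your reformulation, the two--crossing--edge reduction, and the analysis of the degenerate case (all crossing edges through a single vertex) are all correct. The real issue is exactly the one you flag as ``the main obstacle'': you have not shown that short alternating segments $P_1,P_2$ exist for \emph{some} admissible pair of crossing edges, and this is not a routine counting exercise. The difficulty is that your construction uses the hypothesised connected $f$-factor $H'$ only to certify that crossing edges exist; after that you try to build the alternating circuit from $H$ and edges of $G$ alone. But there are graphs with a $2$-component $f$-factor $H$ and plenty of crossing edges yet \emph{no} connected $f$-factor at all, and in any such graph your swap must fail for every choice --- so your argument cannot go through without exploiting $H'$ more substantially. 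Concretely, a length--$3$ $P_1$ needs a \emph{non-$H$} edge of $G$ inside $V_1$ between $N_H(a)$ and $N_H(a')$; if $G[V_1]=H[V_1]$ there are no such edges at all, and you are forced back to $\{a,a'\}\in E(H)$, which need not hold for the crossing-edge endpoints that actually occur. Your deficiency-set bound $|V_i\setminus N_H(v)|<n/5$ controls where $H$-edges \emph{are}, not where non-$H$ edges of $G$ are, so the counting you sketch does not obviously close the gap.

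The paper sidesteps this entirely by trading your two difficulties. Instead of manufacturing a short circuit, it colours $E(H)$ red and $E(H')$ blue and takes a \emph{minimal} alternating circuit $T$ in $E(H)\,\triangle\,E(H')$ that meets the cut $[V_1,V_2]$ --- such $T$ exists automatically because $H'$ is connected. Switching $H$ along $T$ is then shown to be a connected $f$-factor (Lemma~\ref{minauglemma}); the work is to prove that the result has diameter at least $3$. If $|E(T)|=4$ this is immediate (your reduction); otherwise one assumes diameter at most $2$, invokes Lemma~\ref{cutincidencelemma} to force every vertex of one side to sit on a blue cut edge of $T$, and derives a contradictory pair of bounds on the number of such edges landing outside $N_{H}(u_2)$, using that minimality caps the blue degree at $2$ (Lemma~\ref{rbminimalitythm}) and that $|V\setminus N_H[\{u_1,u_2\}]|\le n/5-2$. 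In short: you aimed for ``work for the circuit, diameter for free''; the paper does ``circuit for free, work for the diameter'', and the latter is where the $n/2.5$ threshold is actually spent. If you want to salvage your route, the missing idea is to let $H'$ supply the alternating circuit rather than trying to assemble it from scratch.
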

\noindent
We use the above characterization to show that $\FACTOR$ can be solved in polynomial time using  the following two main graph theoretic results.
\begin{enumerate}
\item If an instance $G$ of $\FACTOR$ has a connected $f$-factor and also an $f$-factor $H$ with two components , then $G$ has a connected $f$-factor $G'$ of diameter at least 3. 
\item Given two vertices $u,v$  in $V(G)$, the problem of computing an $f$-factor in which distance between $u$ and $v$ is 3, reduces to perfect matching. This result can be seen as a generalization of the Tutte's reduction of the $f$-factor problem to the  graph perfect matching problem.  
\end{enumerate}

\noindent
{\bf Preliminaries and Notations}\\
Throughout this paper, $f$ is a function $f:V \rightarrow \mathbb{N}$ such that for each $v$ in $V$, $f(v) \geq \ceil{\frac{n}{2.5}}$ where $n=|V|$. Consequently, we have the following fact.
\begin{Fact}
Let $G_f$ be an $f$-factor of $G$. Then the number of components in $G_f$ is at most 2.
\end{Fact}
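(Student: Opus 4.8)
The plan is a short vertex-counting argument resting on the observation that, because $f$ takes large values, every connected component of an $f$-factor must itself be large. First I would fix an arbitrary connected component $C$ of $G_f$ and pick any vertex $v \in V(C)$. Since $G_f$ is an $f$-factor, $d_{G_f}(v) = f(v)$, and all $f(v)$ neighbours of $v$ in $G_f$ necessarily lie in the same component $C$. Hence $|V(C)| \geq f(v) + 1 \geq \ceil{n/2.5} + 1 > n/2.5$, and this lower bound is uniform over all components.

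Next I would argue by contradiction: suppose $G_f$ has at least three components, say $C_1, C_2, C_3$ (discarding any others). These components are pairwise vertex-disjoint, so $n \geq |V(C_1)| + |V(C_2)| + |V(C_3)| > 3 \cdot (n/2.5) = 1.2\,n$, which is impossible. Therefore $G_f$ has at most two components, as claimed.

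The only point needing a little care is the treatment of the ceiling; bounding $\ceil{n/2.5} \geq n/2.5$ keeps the arithmetic clean, and since $3(n/2.5) > n$ holds for every $n \geq 1$, there are no boundary cases to check separately. I do not anticipate any real obstacle here: the Fact is an immediate consequence of the degree hypothesis on $f$, and it is stated merely to make explicit the two-component dichotomy that the rest of the paper exploits (and, implicitly, that an $f$-factor exists only when $G$ is connected or splits into exactly two such large pieces).
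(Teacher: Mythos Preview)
Your argument is correct and is exactly the intended one: the paper states this as a Fact without proof, relying implicitly on the same vertex-counting observation that each component of an $f$-factor contains at least $\ceil{n/2.5}+1$ vertices, so three pairwise disjoint components cannot fit inside $V$. There is nothing to add.
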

\noindent
Unless otherwise mentioned, $G$ always represents the input graph. 
We also assume that the number of vertices in $G$, denoted by $n$, is at least 12 (the cause for this will be clear from the analysis).
We use the following standard definitions and notations from \citet{DW00}: degree $d_G(v)$ of a vertex $v$ in a graph $G$, minimum degree $\delta(G)$,  and the open neighborhood, $N(v)$ of a vertex $v$. We leave out the subscript $G$ when the graph in question is clear from the context.  Further, the concepts of bridge or cut-edge, the edge-cut $[X,V \setminus X]$ created by a vertex partition $\{X, V \setminus X\}, X \subseteq V$, a tour, the diameter $diam G$, a circuit,  and the subgraph of G induced by $S \subseteq V$, denoted by $G[S]$, are all standard.   For a set $S \subseteq V$, $\displaystyle N(S) = \bigcup_{v \in S} N(v) \setminus S$ is the open neighbourhood of  $S$.
Let $\{X,V \setminus X\}$ be a partition of $V$. We say $E' \subseteq E(G)$ is \textit{incident on} $\{X,V \setminus X\}$ if $E' \cap [X, V \setminus X] \neq \emptyset$.  Further, we say $E' \subseteq E(G)$ \textit{covers} $\{X,V \setminus X\}$ if,  $E'$ is incident on $\{X, V \setminus X\}$, and 
there exists $S \in \{X,V \setminus X\}$ such that $\forall v \in S$, $v$ is incident on some edge $e \in E'$.\\

\noindent
{\bf Equitably Colored Graphs and Alternating Circuits}  were introduced by \citet{DP83}.  Let $G$ be a edge colored graph in which each edge is colored either red or blue.  $G$ is said to be equitably colored if for every vertex $v \in V(G)$, the number of red edges incident on $v$ is equal to the number of blue edges incident on $v$.    An eulerian closed trail $v_1,v_2, \ldots, v_{2t},v_1$ (a sequence of vertices) is said to be an alternating circuit if the edges $\{v_1,v_2\}, \{v_2,v_3\}, \ldots, \{v_{2t},v_1\}$ are coloured alternatingly red and blue.   In other words, an alternating-circuit  is a circuit that is equitably colored. A  minimal alternating circuit $T$ is an alternating circuit such that no proper subset of edges of $T$ forms an alternating circuit.  

\noindent
Let $G'$  be a spanning subgraph of an edge colored graph $G$. An alternating circuit $T$ in $G$ is defined to be a \textit{switch on} $G'$ if the set of red edges in $T$  is a subset of $E(G')$ and the set of the blue edges of $T$ is disjoint from $E(G')$.  If $T$  is a switch on $G'$, the  operation Switching($G'$,$T$)  is a graph obtained by removing from $G'$ all the red edges in $T$  and adding all the blue edges in $T$ to $G'$.

\section{On the Diameter of Connected $f$-factors}
\label{sec:method}
\noindent
In this section we present our  diameter based characterization for graphs which have a connected $f$-factor.
  The following two facts are simple graph theoretic facts and the proofs are straightforward. 
\begin{Fact}
Let $G$ be an undirected graph such that the minimum degree $\delta(G) \geq \frac{n}{2}$, then the diameter of $G$ is at most 2.
 \label{diameter2}
\end{Fact}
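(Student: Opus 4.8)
The plan is to argue directly from the neighbourhood-size bound, in the standard Dirac-type fashion. First I would fix an arbitrary pair of distinct vertices $u,v\in V(G)$ and aim to show $d_G(u,v)\le 2$. If $uv\in E(G)$ there is nothing to prove, so I would assume $u$ and $v$ are non-adjacent; the task then reduces to exhibiting a common neighbour of $u$ and $v$.

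Next I would observe that, since $u\neq v$ and $uv\notin E(G)$, both $N(u)$ and $N(v)$ are contained in $V\setminus\{u,v\}$, a set of size $n-2$. Combining this with $\delta(G)\ge n/2$, which gives $|N(u)|,|N(v)|\ge n/2$, inclusion--exclusion yields
\[
|N(u)\cap N(v)| \;=\; |N(u)|+|N(v)|-|N(u)\cup N(v)| \;\ge\; \tfrac{n}{2}+\tfrac{n}{2}-(n-2) \;=\; 2 \;>\; 0 .
\]
Hence there is a vertex $w\in N(u)\cap N(v)$, and $u,w,v$ is a path of length $2$, so $d_G(u,v)\le 2$.

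Finally, since $u$ and $v$ were arbitrary, every pair of vertices of $G$ is at distance at most $2$, i.e.\ the diameter of $G$ is at most $2$. I do not expect any genuine obstacle here; the only point that needs (minor) care is that $u$ and $v$ themselves lie outside both $N(u)$ and $N(v)$, which is precisely what makes the counting tight, and the statement is trivial in the degenerate cases $n\le 2$.
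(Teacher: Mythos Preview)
Your proof is correct and essentially the same as the paper's intended argument: the paper states the fact without proof (calling it ``straightforward''), and its sketched reasoning is the same pigeonhole on neighbourhood sizes, phrased as ``any vertex outside $\{v_r\}\cup N(v_r)$ must have a neighbour in $N(v_r)$'' rather than via inclusion--exclusion. The two formulations are interchangeable, and your care with the containment $N(u),N(v)\subseteq V\setminus\{u,v\}$ is exactly what is needed.
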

\begin{Fact}
Let $G$ be a graph in which each edge is assigned a color from the set $\{red,blue\}$.  $G$ is a set of vertex  disjoint  alternating-circuits if and only if  $d_R(v)=d_{B}(v)$ for all  $v$ in $G'$. \label{rblemma}
\end{Fact}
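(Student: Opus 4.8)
The plan is to prove the two implications of Fact~\ref{rblemma} by working with \emph{stubs} (half-edges): at each vertex $v$ the colouring splits the edges incident to $v$ into a red set $R_v$ of size $d_R(v)$ and a blue set $B_v$ of size $d_B(v)$.

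For the ``only if'' direction I would argue that if $G$ is a vertex-disjoint union of alternating circuits $T_1,\dots,T_k$, then a fixed vertex $v$ lies in exactly one $T_i$ and, by vertex-disjointness, every edge at $v$ is an edge of $T_i$. Walking along the colour-alternating eulerian closed trail that witnesses $T_i$, each passage through $v$ enters on an edge of one colour and leaves on an edge of the other; since every edge of $T_i$ is used exactly once, these passages give a bijection between $R_v$ and $B_v$, so $d_R(v)=d_B(v)$.

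For the ``if'' direction, assuming $d_R(v)=d_B(v)$ for all $v$, I would first pick at each $v$ an arbitrary perfect matching $M_v$ between $R_v$ and $B_v$ (which exists as the two sets have equal size) and read the family $\{M_v\}_v$ as a transition system: on entering $v$ along a stub $s$, leave along $M_v(s)$. Following these transitions decomposes $E(G)$ into closed trails that alternate in colour by construction --- i.e.\ into alternating circuits --- but a priori several of them can lie in one component. To force one circuit per component I would then take the matchings $\{M_v\}_v$ minimising the total number of circuits produced. This is the step I expect to be the crux: if some component $C$ held two circuits $T,T'$, I would first observe that they cannot be vertex-disjoint (the circuits in $C$ partition $E(C)$ and meet every vertex of $C$, so if they were pairwise vertex-disjoint each circuit's vertex set would be closed under adjacency in $C$ and hence a union of components, making $C$ a single circuit), then pick a common vertex $v$ of $T$ and $T'$ and re-pair at $v$: if $T$ uses the matched pair $(r,b)$ and $T'$ uses $(r',b')$ at $v$, replacing these by $(r,b'),(r',b)$ in $M_v$ splices $T$ and $T'$ into one alternating circuit and alters nothing else, contradicting minimality. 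Hence every component of $G$ is a single alternating circuit, which is exactly the assertion that $G$ is a vertex-disjoint union of alternating circuits.

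Everything other than the minimality/re-pairing argument is routine Eulerian-type bookkeeping in the spirit of \citet{DP83}; in particular the forward direction and the decomposition of a transition system into colour-alternating closed trails are immediate. The one genuine obstacle is the upgrade from \emph{edge}-disjoint to \emph{vertex}-disjoint circuits, handled by the splicing argument above together with the small observation that two circuits inside a connected component must share a vertex.
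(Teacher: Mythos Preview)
Your proof is correct. The paper actually omits the proof entirely (calling the fact ``straightforward''), but the authors' draft argument---visible in the source---runs a colour-restricted Hierholzer procedure on each component: start at any vertex, always leave on the opposite colour from the one you entered on, and rely on the equality $d_R=d_B$ to guarantee an unused exit edge until the trail closes; then splice in any remaining sub-trails in the usual Hierholzer fashion.

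Your route is a mild variant of this. Instead of building the Eulerian alternating trail incrementally, you fix a transition system (a red--blue matching at every vertex), let it cut $E(G)$ into alternating closed trails, and then use an extremal argument to merge trails within a component: choose the transition system minimising the number of trails, and show that two trails sharing a vertex can be spliced by swapping one matched pair at that vertex. The splice is exactly the same local move Hierholzer performs when it absorbs a sub-tour, so the two arguments are really the same mechanism packaged differently. What your version buys is a cleaner separation between ``decompose into alternating trails'' (trivial once you have the matchings) and ``one trail per component'' (the minimality step), at the cost of the small extra observation that two trails in one connected component cannot be vertex-disjoint; the paper's Hierholzer phrasing gets that for free but is a little less explicit about why alternation survives each splice.
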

\begin{Lemma} 
Let $T$ be a minimal alternating-circuit.  For each vertex $v$, both  $d_{R}(v)= d_{B}(v) \leq 2$.
\label{rbminimalitythm}
\end{Lemma}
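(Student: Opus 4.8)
The plan is to argue by contradiction: suppose $T$ is a minimal alternating-circuit in which some vertex $v$ has $d_R(v) = d_B(v) \geq 3$ (recall these two quantities are forced to be equal by Fact~\ref{rblemma}, since $T$ itself is an alternating-circuit). I would fix such a vertex $v$ and look at how the eulerian trail of $T$ visits $v$. Writing the trail out as a cyclic sequence $v_1, v_2, \ldots, v_{2t}, v_1$, the vertex $v$ occurs $k = d_R(v) \geq 3$ times along the sequence; each visit enters $v$ along one edge and leaves along another edge of the opposite color. So the $k$ visits partition the $2k$ edges incident on $v$ (counted with the trail's multiplicity) into $k$ consecutive ``through-pairs,'' each pair consisting of one red and one blue edge.

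The key step is a \emph{splitting-off at $v$} argument. Start the trail at one occurrence of $v$; it leaves $v$ along some edge, wanders through the rest of $T$, and at some point returns to $v$ for the first time along an edge of the color that lets the trail pass through (opposite to the one it needs to continue). At that first return, I would \emph{close off} a sub-circuit: the portion of the trail from the first departure from $v$ up to this first return to $v$ is a closed trail through $v$, and because consecutive edges alternate in color and the two edges at $v$ bounding this portion are themselves of opposite colors (one red, one blue, since the trail passes through $v$ there), this closed sub-trail is itself an alternating-circuit. Moreover it is a \emph{proper} subset of the edges of $T$: because $v$ is visited $k \geq 3 \geq 2$ times total, there is at least one further occurrence of $v$, hence at least two more edges at $v$ (and the edges along the rest of the trail between them) that are not included. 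This contradicts the minimality of $T$. Therefore no vertex can have $d_R(v) = d_B(v) \geq 3$, which gives $d_R(v) = d_B(v) \leq 2$ as claimed.

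The main obstacle I anticipate is making the ``closing off'' step precise: one must verify that cutting the eulerian trail at the first return to $v$ really does yield a closed alternating trail (not just a walk), that the leftover edges also form an alternating-circuit or at least contain one — actually only the first part is needed, since minimality only requires exhibiting \emph{some} proper alternating sub-circuit — and that the color-alternation is not broken at the splice point. The color check is the delicate bit: at the departure the trail uses, say, a red edge, so on returning to close the circuit it must arrive along a blue edge, and indeed since the first return is the first time the trail re-enters $v$, the trail must at that moment be passing \emph{through} $v$ (it has not yet exhausted $v$'s edges), so it enters along one color and would leave along the other — exactly the blue/red boundary needed to close the alternating sub-circuit. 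Once this bookkeeping is nailed down the contradiction is immediate, and the rest is routine.
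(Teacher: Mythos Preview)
Your overall strategy---contradiction, then tracking the eulerian trail's visits to a vertex $v$ with $d_R(v)=d_B(v)\geq 3$ and splitting off a proper alternating sub-circuit---is exactly the paper's approach. But there is a genuine gap in your color check at the splice point, and you have identified the right spot without resolving it.

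Write the four edges at $v$ along the trail, in order, as $e_1$ (first departure), $e_2$ (first arrival), $e_3$ (next departure), $e_4$ (second arrival). The ``pass-through'' property you invoke says that $e_2$ and $e_3$ have opposite colors; it says \emph{nothing} about $e_1$ versus $e_2$. Your sentence ``it enters along one color and would leave along the other---exactly the blue/red boundary needed to close the alternating sub-circuit'' conflates the pair $(e_2,e_3)$ with the pair $(e_1,e_2)$. If $e_1$ and $e_2$ happen to have the same color (which nothing rules out), the segment from $e_1$ to $e_2$ is \emph{not} an alternating circuit when you close it at $v$.

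The fix is a short case analysis, which is precisely what the paper does: if $e_1,e_2$ have opposite colors, the segment $e_1\text{--}e_2$ closes to an alternating circuit; otherwise look at $e_3,e_4$. If those have opposite colors, the segment $e_3\text{--}e_4$ works. If both pairs are same-colored, then (since $e_2,e_3$ are opposite) $e_1$ and $e_4$ are opposite, so the full segment $e_1\text{--}e_4$ closes to an alternating circuit---and here is the one place $k\geq 3$ is actually used, to guarantee this segment is still a \emph{proper} subset of $T$. Your write-up only needs this three-line case split to be complete.
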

\begin{proof}
The proof is by contradiction. Let $v$ be a vertex in $T$ which has at least 3 incident red edges.  We show the existence of  an alternating circuit $T'$ whose edges are a proper subset of the edges of $T$ as follows: Consider the sequence of edges  obtained from $T$ by starting at the first occurence of $v$ in $T$, and returning to $v$ for the {\em the second time}.   Let $e_1, e_2, e_3, e_4$ be the edges incident on $v$ in the sequence, in the same order in which they were present in the sequence.   $e_2$ and $e_3$ are consecutive in the sequence and are of different colors.  If $e_1$ and $e_2$ are of different colors, then the sequence of edges from $e_1$ to $e_2$ is an alternating circuit, thus contradicting the minimality of $T$.  If $e_1$ and $e_2$ are of the same color, and if $e_3$ and $e_4$ are of the same color,  all the edges in the sequence form an alternating circuit, and is proper subset of the edges in $T$.  The proper containment follows because at least one red edge incident of $v$ is not present in this sequence.  This contradicts the minimality of $T$.  If $e_3$ and $e_4$  are of different colors, then the sequence of edges from $e_3$ to $e_4$ is an alternating circuit, again contradicting the minimality of $T$.  Hence the lemma.

\end{proof}
\begin{Lemma}
\label{cutincidencelemma}
Let $G$ be a graph of diameter 2 and let $\{X,V \setminus X\}$ be a partition of $V(G)$. There exists a set of edges $E' \subseteq [X,V\setminus X]$ that covers $\{X, V \setminus X\}$.
\end{Lemma}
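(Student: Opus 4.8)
The plan is to reduce the covering requirement to a simple reachability statement and then exploit the diameter-$2$ hypothesis directly. Observe first that since $E'$ is required to lie inside $[X,V\setminus X]$, every edge of $E'$ automatically crosses the cut, so the ``incident on $\{X,V\setminus X\}$'' part of the definition of covering holds as soon as $E'\neq\emptyset$. Thus the only real content is to produce one side $S\in\{X,V\setminus X\}$ together with, for each $v\in S$, a crossing edge incident on $v$; taking $E'$ to be the collection of these chosen edges then covers $\{X,V\setminus X\}$ by definition. Consequently it suffices to prove the following purely structural claim: in a graph of diameter $2$, for any partition into two nonempty parts, at least one of the two parts has the property that every one of its vertices has a neighbour on the opposite side.

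First I would set up a proof of this claim by contradiction. Suppose neither side has the stated property. Then there is a vertex $x\in X$ all of whose neighbours lie in $X$ (i.e.\ $N(x)\subseteq X$), and there is a vertex $y\in V\setminus X$ with $N(y)\subseteq V\setminus X$. The goal is to show that $x$ and $y$ are then too far apart for a diameter-$2$ graph. They cannot be adjacent, for otherwise $y\in N(x)$ would give a neighbour of $x$ lying outside $X$. They also cannot share a common neighbour $z$, since $z\in N(x)\subseteq X$ and simultaneously $z\in N(y)\subseteq V\setminus X$, which is impossible. Hence the distance between $x$ and $y$ is at least $3$, contradicting $\mathrm{diam}\,G=2$. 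This establishes the claim, and with it the lemma via the reduction above: the side $S$ furnished by the claim, together with one crossing edge per vertex of $S$, yields the desired $E'$.

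The argument is short, so I do not expect a deep obstacle; the only points needing care are the degenerate cases. If one of the parts is empty the cut $[X,V\setminus X]$ is empty and no covering $E'$ can exist, so the statement must be read for a genuine partition with both parts nonempty; I would make this reading explicit. I would also note that diameter $2$ forces $G$ to be connected, so whichever side $S$ the claim hands us is nonempty and each selected crossing edge genuinely exists, guaranteeing $E'\neq\emptyset$ and hence that $E'$ is incident on the partition as required.
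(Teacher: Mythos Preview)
Your proof is correct and follows essentially the same approach as the paper: both argue by contradiction that if each side contains a vertex with no neighbour across the cut, then those two vertices are at distance at least $3$, violating the diameter-$2$ hypothesis. Your write-up is more explicit about the reduction from covering to the structural claim and about the degenerate cases, but the underlying argument is identical.
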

\begin{proof}
We obtain a contradiction to the fact that the graph has diameter 3 by assuming that the lemma is false. 
Let  $v_1 \in X, v_2 \in V\setminus X$ such that $N(v_1) \subseteq X$ and $N(v_2) \subseteq V \setminus X$.  
Then the shortest path between $v_1$ and $v_2$ is of length at least 3, and this contradicts the premise $diam ~G = 2$.  Hence the claim in the lemma is true.
\end{proof}
\noindent
In the following,  given an $f$-factor of $G'$ of $G$ we consider the coloring of $G$ in which $E(G')$ is colored red, and $E(G) \setminus E(G')$ is colored blue.  The alternating circuits we consider are with respect to this coloring.
\begin{Lemma}
Let $G$ be a graph with at least 12 vertices.  
 Let $X$ and $V \setminus X$ be two components of an $f$-factor $G'$ of $G$.   Let $T$ be a minimal alternating-circuit which is a switch on $G'$ and is incident on $\{X, V \setminus X\}$.  
Then Switching($G'$,$T$) is a connected $f$-factor of $G$.
\label{minauglemma}
\end{Lemma}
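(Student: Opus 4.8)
The plan is to verify quickly that $H:=\mathrm{Switching}(G',T)$ is an $f$-factor, to record that $H$ has an edge crossing $\{X,V\setminus X\}$, and then to prove $H$ is connected by contradiction, playing a hypothetical partition witnessing its disconnectedness against the partition $\{X,V\setminus X\}$. Colour $E(G')$ red and $E(G)\setminus E(G')$ blue, following the convention stated before the lemma. Since $T$ is an alternating circuit it is equitably coloured, so $d_R^T(v)=d_B^T(v)$ at every vertex $v$; as the red edges of $T$ lie in $E(G')$ and its blue edges lie outside $E(G')$, switching deletes $d_R^T(v)$ edges at $v$ and adds $d_B^T(v)$ new ones, hence $d_H(v)=d_{G'}(v)=f(v)$, and since $H\subseteq G$ it is an $f$-factor. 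Moreover, $T$ being incident on $\{X,V\setminus X\}$ it contains an edge $e\in[X,V\setminus X]$; no edge of $G'$ crosses this cut ($X$ and $V\setminus X$ being the components of $G'$), so $e$ is blue and hence $e\in E(H)$. Thus $H$ has an edge crossing $\{X,V\setminus X\}$, which I use below.

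Suppose now $H$ is disconnected; since any $f$-factor has at most two components, it has exactly two, $Y$ and $V\setminus Y$. Partition $V$ into the four quadrants $A=X\cap Y$, $B=X\setminus Y$, $C=Y\setminus X$, $D=V\setminus(X\cup Y)$. The crux is the estimate that \emph{every nonempty quadrant $W$ satisfies $|W|\ge f_{\min}-1\ge\ceil{n/2.5}-1$}, where $f_{\min}=\min_v f(v)$: take $v\in W$; then $v$ lies in one of the $G'$-components $X,V\setminus X$, which contains all of $N_{G'}(v)$, and switching removes from $v$ only its red $T$-edges, at most two of them by Lemma~\ref{rbminimalitythm}, so at least $f(v)-2$ of the $G'$-neighbours of $v$ survive in $H$; each survivor lies in the $G'$-component of $v$ and, being an $H$-neighbour of $v$, in the $H$-component of $v$, hence in $W$, giving $|W|\ge(f(v)-2)+1=f(v)-1$.

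It remains to rule out each possibility for the family of nonempty quadrants. Each pair $\{A,B\},\{C,D\},\{A,C\},\{B,D\}$ meets a nonempty quadrant (as $X,V\setminus X,Y,V\setminus Y$ are all nonempty), so at least two quadrants are nonempty. If exactly two are nonempty, the only configurations compatible with this are $\{A,D\}$ and $\{B,C\}$, each forcing $\{Y,V\setminus Y\}=\{X,V\setminus X\}$; then $H$ has no edge across $\{X,V\setminus X\}$, contradicting the first paragraph. If exactly three are nonempty, by symmetry (swapping $X\leftrightarrow V\setminus X$ and/or $Y\leftrightarrow V\setminus Y$) say $D=\emptyset$: then $V\setminus X=C$ is a whole component of $G'$ and $V\setminus Y=B$ is a whole component of $H$, so $|B|,|C|\ge f_{\min}+1$ while $|A|\ge f_{\min}-1$, giving $n\ge 3f_{\min}+1\ge\tfrac{6}{5}n+1>n$. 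If all four are nonempty, $n\ge 4(f_{\min}-1)\ge\tfrac{8}{5}n-4>n$ since $n\ge 12$. Each case is contradictory, so $H$ is connected.

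The only delicate step is the three-nonempty-quadrants case: the crude bound $3(f_{\min}-1)$ need not exceed $n$ for small $n$, and one must observe that an empty quadrant makes two of the remaining quadrants \emph{full} components --- one of $G'$, one of $H$ --- which raises their lower bounds to $f_{\min}+1$ and rescues the count; the two-nonempty case is precisely where the hypothesis that $T$ is incident on the cut gets used.
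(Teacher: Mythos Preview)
Your proof is correct, but it takes a genuinely different route from the paper's. The paper argues directly that $G''[X]$ and $G''[V\setminus X]$ are each connected: since $|X|\le\lfloor\tfrac{1.5n}{2.5}\rfloor-1$ and switching along a minimal alternating circuit drops each degree by at most~$2$ (Lemma~\ref{rbminimalitythm}), the minimum degree in $G''[X]$ is at least $\lceil\tfrac{n}{2.5}\rceil-2\ge|X|/2$ for $n\ge 12$, so Fact~\ref{diameter2} gives that $G''[X]$ has diameter at most~$2$; likewise for $G''[V\setminus X]$, and the crossing edge finishes. You instead assume a second bipartition $\{Y,V\setminus Y\}$ witnesses disconnection and run a quadrant count, the key observation being that every vertex keeps at least $f(v)-2$ of its $G'$-neighbours in $H$, all of which land in the same quadrant. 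Both arguments ultimately hinge on the same degree-drop bound from Lemma~\ref{rbminimalitythm}. The paper's approach is shorter and, more importantly, yields the extra conclusion that $G''[X]$ and $G''[V\setminus X]$ each have diameter at most~$2$, which is exactly what Lemma~\ref{diameterlemma} needs later; your argument establishes connectedness only, so if you went this way you would have to redo the diameter estimate there. On the other hand, your quadrant count is pleasantly self-contained and sidesteps the slightly fiddly numerical check $\lfloor\tfrac{1.5n}{2.5}\rfloor-1\le 2(\lceil\tfrac{n}{2.5}\rceil-2)$.
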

\begin{proof}
Since $X$ and $V \setminus X$ have at least $\ceil{\frac{n}{2.5}}+1$ vertices, it follows that both $X$ and $V \setminus X$ are of size at most $\lfloor{\frac{1.5n}{2.5}}\rfloor-1$. Let $G''$ denote the graph Switching($G'$,$T$). Without loss of generality, let us consider $X$. 
We know that the minimum degree in $G'[X]$ is at least $\ceil{\frac{n}{2.5}}$. Since  $T$ is a minimal alternating circuit,  from Lemma \ref{rbminimalitythm}, it follows that the degree of each vertex in $G''[X]$ is at least $\ceil{\frac{n}{2.5}}-2$.  Further, the number of vertices in $X$ is at most $\lfloor{\frac{1.5n}{2.5}}\rfloor-1$.  For $n\geq 12$, $\lfloor{\frac{1.5n}{2.5}}\rfloor-1< 2 \times (\ceil{\frac{n}{2.5}}-2)$. In other words the minimum degree in $G''[X]$ is more than half the number of vertices in $X$.  Consequently, by Fact  \ref{diameter2}, $G''[X]$ is of diameter at most 2.  Similarly, $G''[V \setminus X]$ is of diameter at most 2. Further, since $T$ is incident on $\{X,V \setminus X\}$, there is an edge with one end point in $X$ and the other in $V \setminus X$.   Therefore $G''$  is connected.   
\end{proof}
%
\noindent
The next lemma plays  a critical role in the polynomial time computability of \FACTOR.  It implies that if Tutte's $f$-factor algorithm returns a solution with 2 components and if the graph has a connected $f$-factor, then there is one of diameter at least 3.
\begin{Lemma}
Let $G'$ be an $f$-factor of $G$ with two components $\{X, V\setminus X\}$. 
If $G$ has a connected $f$-factor, then there exists a minimal alternating-circuit $T$ incident on $\{X, V \setminus X\}$ and is a switch on $G'$. Further, Switching($G'$,$T$) is an $f$-factor of diameter at least 3.
\label{switchinglemma} 
\end{Lemma}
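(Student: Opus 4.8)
The plan is to fix a connected $f$-factor $H$ of $G$, keep the colouring in which $E(G')$ is red and $E(G)\setminus E(G')$ is blue (as set up just before the lemma), first produce the circuit $T$, and then control the diameter of the switching.

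\emph{Producing $T$.} Since $G'$ and $H$ are $f$-factors, the edge set $D:=E(G')\triangle E(H)$ has, at every vertex $v$, equally many red and blue edges (both equal $f(v)-d_{G'\cap H}(v)$), so by Fact~\ref{rblemma} it is a vertex-disjoint union of alternating circuits. As $H$ is connected but $G'$ is not, $H$ has an edge between $X$ and $V\setminus X$; that edge lies in $D$ and is blue, so some alternating circuit of $D$ is incident on $\{X,V\setminus X\}$, and every alternating circuit contained in $D$ is a switch on $G'$ (its red edges lie in $E(G')\setminus E(H)\subseteq E(G')$, its blue edges lie in $E(H)\setminus E(G')$, hence off $E(G')$). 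Now choose $T$ among all alternating circuits that are switches on $G'$ and incident on $\{X,V\setminus X\}$ so as to minimise lexicographically the pair $\big(\,|E(T)\cap[X,V\setminus X]|\,,\,|E(T)|\,\big)$. If $T$ were not minimal, a proper subset of its edges would form an alternating circuit $T''$; then $E(T)\setminus E(T'')$ also has equal red/blue degree at each vertex, hence splits into alternating circuits (Fact~\ref{rblemma}), each a switch on $G'$. Checking the two cases ($T''$ incident on the cut or not), in each case either $T''$ or one of these pieces is an alternating circuit that is a switch on $G'$, is incident on the cut, and is strictly smaller than $T$ in the lexicographic order --- a contradiction. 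So $T$ is a minimal alternating circuit, giving the first assertion.

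\emph{Diameter after switching.} By Lemma~\ref{minauglemma}, $G'':=\mathrm{Switching}(G',T)$ is a connected $f$-factor, and the argument there also shows that $G''[X]$ and $G''[V\setminus X]$ have diameter at most $2$ (each has minimum degree exceeding half its order, so Fact~\ref{diameter2} applies). The edges of $G''$ between $X$ and $V\setminus X$ are exactly the blue edges of $T$ that cross the cut, since $G'$ has no such edges and switching only deletes red edges of $T$ (which stay on one side) and inserts blue edges of $T$; let $c$ be their number. The crucial claim is $c<\min\{|X|,|V\setminus X|\}$ (recall $\min\{|X|,|V\setminus X|\}\ge\ceil{n/2.5}+1$). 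Granting this, $c$ crossing edges touch at most $c$ vertices of $X$ (each has a single endpoint there), so some $x_0\in X$ meets no crossing edge of $G''$, and symmetrically some $y_0\in V\setminus X$ meets none; then $N_{G''}(x_0)\subseteq X$ and $N_{G''}(y_0)\subseteq V\setminus X$, so every $x_0$--$y_0$ path must use a crossing edge and has length at least $3$, i.e.\ $G''$ has diameter at least $3$. (Equivalently: if $G''$ had diameter $2$, Lemma~\ref{cutincidencelemma} would force the crossing edges of $G''$ to cover all of one side, hence $c\ge\min\{|X|,|V\setminus X|\}$.)

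\emph{Bounding $c$.} It remains to prove $c<\min\{|X|,|V\setminus X|\}$, which I expect to be the main obstacle. Since $T$ is a closed trail, $c$ is even and at least $2$, and by Lemma~\ref{rbminimalitythm} every vertex has at most two blue edges of $T$. If $c$ were at least $\ceil{n/2.5}+1$, I would contradict the minimality of the first coordinate in the choice of $T$ by building an alternating circuit that is still a switch on $G'$, still incident on the cut, but has fewer crossing edges. The idea is to exploit that $G'[X]$ and $G'[V\setminus X]$ are extremely dense --- minimum degree $\ge\ceil{n/2.5}$ while each side has at most $\lfloor 1.5n/2.5\rfloor-1$ vertices, so any two vertices on the same side have a short red $G'$-path between them, and in fact most pairs are $G'$-adjacent --- in order to shortcut all but one arc of $T$ on a given side by a single red $G'$-edge joining two endpoints of crossing edges. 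The parity is compatible because an arc of $T$ lying inside one side sits between two blue edges and so begins and ends with a red edge, so a length-one red arc fits; and the density gives enough freedom to select a pair of crossing edges whose relevant endpoints are distinct and $G'$-adjacent. Carrying this out strictly decreases the number of crossing edges, contradicting the choice of $T$; hence $c\le\ceil{n/2.5}<\min\{|X|,|V\setminus X|\}$, which completes the proof.
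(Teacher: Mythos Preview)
Your reduction of the diameter claim to the inequality $c<\min\{|X|,|V\setminus X|\}$ is clean and correct, and your argument that the lex-minimal $T$ is a minimal alternating circuit is fine. The gap is entirely in the paragraph ``Bounding $c$''. You assert that when $c\ge\ceil{n/2.5}+1$ one can ``shortcut all but one arc of $T$ on a given side by a single red $G'$-edge joining two endpoints of crossing edges'' and that ``carrying this out strictly decreases the number of crossing edges''. But replacing an arc of $T$ that lies inside $X$ by a single red edge between its two endpoints does \emph{not} change the number of crossing edges at all --- it only shortens the second coordinate of your lex order. To actually drop crossing edges you would need to splice between endpoints of \emph{non-consecutive} crossing edges, and then you must (i) pick endpoints of the right parity type (an ``entering-$X$'' endpoint must be joined to a ``leaving-$X$'' endpoint, otherwise alternation breaks at the splice), (ii) guarantee those two specific vertices are $G'$-adjacent, (iii) check the new red edge is not already an edge of $T$, and (iv) verify the resulting walk is a closed trail. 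None of this is done; the density remark (``any vertex of $X$ is $G'$-adjacent to all but about $n/5$ others'') is suggestive but does not by itself deliver (i)--(iv). As written, this step is not a proof.

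For comparison, the paper does not try to bound $c$. It fixes a connected $f$-factor $H$, decomposes $E(G')\triangle E(H)$ into alternating circuits, and takes a minimal one $T$ incident on the cut. It then splits into two cases. If some such $T$ has exactly four edges, the switching visibly leaves vertices on each side untouched, giving diameter at least $3$ directly. If every such $T$ has more than four edges, the paper assumes $\mathrm{diam}(G'')\le 2$, invokes Lemma~\ref{cutincidencelemma} to get one side (say $X$) fully covered by crossing edges of $T$, picks a crossing edge $\{u_1,u_2\}$, and counts: the edges of $T$ leaving $N_{G'}(u_1)$ cannot land in $N_{G'}(u_2)$ (that would produce a forbidden four-edge circuit), so at least $\ceil{n/2.5}-1$ of them land in $(V\setminus X)\setminus N_{G'}[u_2]$, a set of size at most $\lfloor n/5\rfloor-2$; but Lemma~\ref{rbminimalitythm} caps the blue degree at $2$, so that set can absorb at most $2(\lfloor n/5\rfloor-2)$ such edges --- a contradiction. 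If you want to salvage your route, the natural fix is to show that when your lex-minimal $T$ has $c\ge 3$ there must exist a four-edge alternating switch on $G'$ across the cut (two blue crossing edges whose $X$-endpoints and $V\setminus X$-endpoints are both $G'$-adjacent); that would force $c=2$ by lex-minimality and you would be done. Proving that existence, however, needs essentially the same counting as the paper's Case~2.
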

\begin{proof}
We prove by contradiction on the diameter of the $f$-factor computed by the switching operation.  Let $H$ be a connected $f$-factor of $G$. Color the edges in $G'$ with color red and those in $H$ with color blue. By Fact \ref{rblemma}, each component in $E(G')\bigtriangleup E(H)$
is an alternating-circuit. Since $H$ is connected, there exists a minimal alternating-circuit $T$ in $E(G')\bigtriangleup E(H)$, which is a switch on $G'$ and is incident on $\{X,V\setminus X\}$. From Lemma \ref{minauglemma} it follows that Switching($G'$,$T$), denoted by $G''$, is a connected $f$-factor.   Our proof analyzes two cases on the structure of such  a $T$:\\ 
{\bf Case 1:  There exists a $T$ with exactly four edges:}In this case we prove that   $G''$ is of diameter at least 3.  Let $u \in X$ and $v \in V \setminus X$ be vertices not in $V(T)$.  Clearly, these are at a distance of atleast 3 in $G''$.  \\
{\bf Case 2:  Every $T$ is of length more than 4:}  Let us assume that $G''$ is of diameter at most 2.  Consider an edge $(u_1,u_2)$ in $E(G'')$ such that $u_1 \in X$ and $u_2 \in V \setminus X$. Now, the number of vertices 
in $N_{G'}(\{u_1,u_2\})$ is at least $\ceil{\frac{2n}{2.5}}$, since $G'$ is an $f$-factor.  Therefore the number of vertices in $V \setminus N_{G'}[\{u_1,u_2\}]$ is at most $\lfloor \frac{n}{5} \rfloor-2$. Since we have assumed that $G''$ is of diameter at most 2,  from Lemma \ref{cutincidencelemma} we know that one of the two sets from $\{X, V \setminus X\}$ has the property that each vertex in the set is incident on an edge in $[X, V \setminus X] \cap E(T)$.  Without loss of generality, let $X$ be this set and let $E' = [X, V \setminus X] \cap E(T)$.  Therefore, each vertex in $N_{G'}(u_1)$ is incident on an edge in $[X,V \setminus X] \cap E(T)$.  Further, each  edge incident on a vertex in $N_{G'}(u_1)$ is not incident on a vertex in $N_{G'}(u_2)$- the reason for this is that the existence of such an edge will result in an alternating circuit consisting of 4 edges, and we are in the case where such an alternating circuit does not exist.    Therefore, the number of blue edges in $E'$ incident on vertices in $(V \setminus X) \setminus N_{G'}(u_2)$ is at least $\ceil{{\frac{n}{2.5}}}$.  Since $T$ is a minimal alternating circuit, $u_2$ has at most one more blue edge incident on it other than $\{u_1,u_2\}$.  Therefore, it follows that the number of  edges of $E'$ incident on vertices in $(V \setminus X) \setminus N_{G'}[u_2]$ is at least $\ceil{{\frac{n}{2.5}}}-1$.  On the other hand, since $u_1$ and $u_2$ are in different components in $G'$, $|(V \setminus X) \setminus N_{G'}[u_2]| \leq |V \setminus N_{G'}[\{u_1,u_2\}]|$. Since $|V \setminus N_{G'}[\{u_1,u_2\}]| \leq \lfloor \frac{n}{5} \rfloor-2$, it follows that $|(V \setminus X) \setminus N_{G'}[u_2]| \leq\lfloor \frac{n}{5} \rfloor-2$.
Since $T$ is a minimal alternating-circuit, by Lemma \ref{rbminimalitythm} the blue degree at each vertex  is at most 2.
Therefore, the number of edges in $E'$ incident on the vertex set $(V \setminus X) \setminus N_{G'}[u_2]$ is at most twice the size of the set.   Thus the number of edges in $E'$ incident on  $(V \setminus X) \setminus N_{G'}[u_2]$ is at most $\lfloor \frac{n}{2.5} \rfloor -4$. 
We have derived a   contradictory set of inequalities involving the number of edges in $E'$ incident on $(V \setminus X) \setminus N_{G'}[u_2]$.
Therefore our assumption on the diameter of $G''$ being at most 2 is wrong.   Hence the lemma. 
\end{proof}
\noindent
We now complete the proof of Theorem \ref{mainthm}.\\
\textit{\textbf{Proof of Theorem \ref{mainthm}:}}
\begin{proof}
 If $G$ has  an  $f$-factor $G'$ with two components and  also has a connected $f$-factor,  then, from Lemma \ref{switchinglemma}, there exists a minimal alternating-circuit $T$ such that $G''$=Switching($G'$,$T$) is a connected $f$-factor of diameter at least 3.  Consequently, it follows that if all
connected $f$-factors of $G$ are of diameter at most 2, then all $f$-factors of $G$ are connected.  Hence the theorem.
\end{proof}
\noindent
In the next section, we use this characterization to design a polynomial time algorithm for $\FACTOR$. 
\section{Polynomial time Algorithm for the Connected $f$-factor Problem}
We start by presenting the Algorithm $A_f$ for the connected $f$-factor problem as follows:
\begin{enumerate}
\item Run Tutte's $f$-factor algorithm with $G$ and $f$ as input.
\item {\bf If} the algorithm fails to return an $f$-factor, then exit after reporting Failure.
\item {\bf Else} If it returns a connected $f$-factor, then  output it and exit.
\item {\bf Else} Let $G'$ be the output $f$-factor, with two components $\{X, V \setminus X\}$.
\begin{enumerate}
\item For each $u \in X$ and $v \in V \setminus X$, and each induced path $P_{uv}$ of 3 edges \\
{\tt/*  Lemma \ref{diameterlemma}  shows that one of these choices is correct. */}
\begin{enumerate}
\item Test for an $f$-factor of $G$ containing $P_{uv}$ as a shortest path between $u$ and $v$. \\
{\tt /*  Using  subroutine Distance-Constrained-Factor */}
\item If an $f$-factor is found, output it and exit. \\
{\tt /* Theorem \ref{3diafact} shows that it is a connected $f$-factor */}
\end{enumerate}
\item Exit reporting failure.
\end{enumerate}
\end{enumerate}
We start with the following lemma that is crucial to prove the correctness of the iteration in Step 4.a and then present a variant of Tutte's reduction to deal with Distance Constrainted $f$-factors.
\begin{Lemma}
Let $G$ be a graph with at least 12 vertices, and let $G'$ be an $f$-factor of $G$ with two components $\{X, V\setminus X\}$.  If $G$ has a connected $f$-factor $G''$=Switching($G'$,$T$) for a minimal alternating-circuit $T$, then 
for each pair of vertices $u,v$  at a distance at least 3 in $G''$,  exactly one of $u$ and $v$ is in $X$.
\label{diameterlemma} 
\end{Lemma}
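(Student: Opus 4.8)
The plan is to reduce this to the diameter estimate that was already established inside the proof of Lemma~\ref{minauglemma}. Recall that in that proof we showed, for a minimal alternating-circuit $T$ that is a switch on $G'$ and incident on $\{X, V\setminus X\}$, that both induced subgraphs $G''[X]$ and $G''[V\setminus X]$ have minimum degree at least $\ceil{\frac{n}{2.5}}-2$, which for $n\ge 12$ strictly exceeds half of $\lfloor\frac{1.5n}{2.5}\rfloor-1$, an upper bound on both $|X|$ and $|V\setminus X|$; hence by Fact~\ref{diameter2} each of $G''[X]$ and $G''[V\setminus X]$ has diameter at most $2$. So the whole task is to turn this ``internal'' diameter bound into a statement about pairs at distance at least $3$ in $G''$ itself.

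First I would verify that Lemma~\ref{minauglemma} actually applies to the circuit $T$ in the hypothesis. By assumption $G''=\textup{Switching}(G',T)$, so $T$ is a minimal alternating-circuit that is a switch on $G'$; and since $G'$ has the two components $X$ and $V\setminus X$ while $G''$ is connected, the blue edges added by the switch must include at least one edge of the cut $[X,V\setminus X]$, i.e.\ $T$ is incident on $\{X,V\setminus X\}$. Thus all hypotheses of Lemma~\ref{minauglemma} are met, and $G''[X]$ and $G''[V\setminus X]$ are each of diameter at most $2$.

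Now take any two vertices $u,v$ at distance at least $3$ in $G''$. If both lay in $X$, then a shortest $u$--$v$ path inside $G''[X]$ would have length at most $2$, and this is also a path of $G''$, so $d_{G''}(u,v)\le 2$, a contradiction; the same argument applied to $V\setminus X$ rules out $u$ and $v$ both lying in $V\setminus X$. Since $\{X,V\setminus X\}$ partitions $V(G)$, exactly one of $u$ and $v$ lies in $X$, which is the claim. There is essentially no obstacle beyond recording the hypothesis check of the second paragraph; the substance is entirely carried by the degree counting already performed for Lemma~\ref{minauglemma}.
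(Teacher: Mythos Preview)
Your proposal is correct and follows essentially the same route as the paper: both arguments hinge on showing that $G''[X]$ and $G''[V\setminus X]$ each have diameter at most $2$ (via the degree bound $\ceil{n/2.5}-2$ from Lemma~\ref{rbminimalitythm} and Fact~\ref{diameter2}), from which the conclusion is immediate. The only cosmetic difference is that you invoke the internal computation of Lemma~\ref{minauglemma} by reference, whereas the paper repeats that degree count explicitly; your extra check that $T$ is incident on $\{X,V\setminus X\}$ is harmless but not actually needed for the diameter bound on the two induced subgraphs.
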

\begin{proof}
Since $G'$ is an $f$-factor, it follows that $|X| < \frac{3n}{5}$ and $ |V \setminus X| < \frac{3n}{5}$.  Since $f(v) \geq \frac{n}{2.5}, v \in V$, it follows from Fact \ref{diameter2} that the diameter of $G'[X]$ and $G'[V \setminus X]$ is at most 2.
Further  by Lemma \ref{rbminimalitythm}, we know that the minimum vertex degree in  $G''[X]$ and $G''[V \setminus X]$ is at least $\frac{n}{2.5}-2$- since $G''=$Switching($G'$,$T$) where $T$ is a minimal alternating-circuit.  Since $n \geq 12$, $\frac{n}{2.5}-2 > \frac{1}{2} \times \frac{2n}{5}$.  
For $n \geq 12$, in $G''$, the minimum degree in  $G''[X]$ is at least $\frac{|X|}{2}$ and the minimum degree  in $G''[V \setminus X]$ is at least $\frac{|V \setminus X|}{2}$.  Therefore, by Fact \ref{diameter2}, $G''[X]$ and $G''[V \setminus X]$ are subgraphs of diameter at most 2.    Therefore, any two vertices $u$ and $v$ which are at distance at least 3 cannot  both be in $X$ or $V \setminus X$.  Hence the lemma. 
\end{proof}
\begin{algorithm}[H]
\label{Algo2}
\SetKwInOut{Input}{Input}\SetKwInOut{Output}{Output}
\Input{$G(V,E)$,$f$,$u$,$v$}\Output{$H$ an undirected graph}
Remove edge $\{u,v\}$ from $G$ if exists.\\
{\tt /*  Tutte's Reduction from $f$-factor to Perfect Matching */}\\
For each vertex $x$ in $V(G)$, Let $e(x) = d_G(x) - f(x)$; \\
Construct  graph $H$ as follows:\\
For each $x$ in $V(G)$, add a biclique $K_{\{e(x),d(x)\}}$, having  sets $A(x)$ of size $d(x)$ and $B(x)$ of size $e(x)$.\\
For each edge $(q,w)$ in $E(G)$, add an edge involving one vertex of $A(q)$ and one vertex of $A(w)$, and for each $\upsilon \in V$, each vertex of $A(\upsilon)$ participates in one such edge. \\
{\tt /* Enforcing the distance of more than 2 between $u$ and $v$*/ }\\
Let $S=\{l| (u,l) \text{ and } (l,v) \in E(G)\}$. //{\tt the $f$-factor should avoid each path  $P_{uv}$ of length two}\\
For each $l \in S$ steps 10 and 11:\\
\hspace*{0.5cm}Let $\{x,y\} \subseteq A(l)$ such that  $x$ is adjacent to a vertex in $A(u)$ 
      and $y$ is adjacent to a vertex in $A(v)$ in $H$.\\
\hspace*{0.5cm}Let $z \in B(l)$. Remove edge set  $\{\{z,w\}|w \in A(l)\setminus \{x,y\}\}$ from $H$.\\ 
\caption{Redn-PM($G$,$f$,$u$,$v$)-
Reduction from $f$-factor with Distance Constraints to Perfect Matching}
\end{algorithm}
\noindent
The routine {\bf Distance-Constrained-Factor($G$,$f$,$u$,$v$)} performs the following steps to compute an $f$-factor $G'$ in which the distance between $u$ and $v$ is at least 3.
\begin{enumerate}
\item Generate an instance of perfect matching $H$ using the reduction Redn-PM($G$,$f$,$u$,$v$) as shown in Algorithm 1.
\item If $H$ does not have a perfect matching report failure to find an $f$-factor in which distance between $u$ and $v$ is at least 3 and exit.
\item Let $M$ be a perfect matching in $H$. Compute the $f$-factor $G'$ from $M$ and $H$ using the $f$-factor computation in Tutte's reduction.
\end{enumerate}
We state the following theorem without proof about the reduction described in Algorithm 1.   
The proof  has been left out as it is exactly on the lines of Tutte's proof as presented in the book by \citet{DW00} (page 141).
\begin{Theorem}
\label{tutteredn}
Let $G(V,E)$ be an undirected graph, function $f:V \rightarrow \mathbb{N} \cup \{0\}$ and vertices $u,v \in V$.  $G$ has an $f$-factor in which the distance between $u$ and $v$ is at least 3 if and only if $H$ output by Redn-PM($G$,$f$,$u$,$v$) has a perfect matching.
\end{Theorem}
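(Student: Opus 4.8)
### Proof proposal for Theorem \ref{tutteredn}

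The plan is to follow Tutte's classical gadget reduction from the $f$-factor problem to perfect matching verbatim, and then argue that the surgery performed in steps 10--11 of Algorithm~1 (the removal of certain edges incident on one vertex $z \in B(l)$ for each $l \in S$) exactly forbids, and only forbids, those $f$-factors that route a length-two path through $l$ between $u$ and $v$. Since $\{u,v\}$ has already been deleted from $G$, an $f$-factor of the modified graph has $\mathrm{dist}(u,v) \ge 3$ if and only if it contains no vertex $l$ adjacent to both $u$ and $v$ in its subgraph on the edges $\{u,l\}$ and $\{l,v\}$; equivalently, for every $l \in S$, the factor omits at least one of $\{u,l\}, \{l,v\}$.

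First I would recall the correspondence in Tutte's reduction: perfect matchings $M$ of $H$ correspond bijectively to $f$-factors $F$ of $G$, where an original edge $(q,w) \in E(G)$ lies in $F$ iff the corresponding $A(q)$--$A(w)$ edge lies in $M$; the "internal" matching within each biclique $K_{e(x),d(x)}$ then forces exactly $f(x)$ of the $A(x)$-vertices to be matched externally. I would make this precise by citing the argument in \citet{DW00}. Second, I would analyze a single biclique $B(l), A(l)$ for $l \in S$. Let $x \in A(l)$ be the (unique, by construction) vertex joined to $A(u)$ and $y \in A(l)$ the one joined to $A(v)$; note $x \ne y$ since $\{u,v\}$ was removed so no single edge does both, and since each $A(l)$-vertex participates in exactly one external edge. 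After step 11, the distinguished vertex $z \in B(l)$ is adjacent in $H$ only to $x$ and $y$ among $A(l)$. Hence in any perfect matching $M$ of $H$, $z$ must be matched to $x$ or to $y$, which means at least one of $x,y$ is matched internally, i.e. at least one of the external edges $\{u,l\}$ (via $x$) or $\{l,v\}$ (via $y$) is absent from the corresponding $f$-factor. This gives the forward direction: $H$ has a perfect matching $\Rightarrow$ $G$ has an $f$-factor with $\mathrm{dist}(u,v)\ge 3$.

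For the converse, given an $f$-factor $F$ of $G$ (on the $\{u,v\}$-deleted graph) with $\mathrm{dist}(u,v) \ge 3$, I would build the matching $M$ on $H$ in the standard way and check that it survives the edge deletions. For each $l \in S$, since $\mathrm{dist}(u,v) \ge 3$ the factor $F$ cannot contain both $\{u,l\}$ and $\{l,v\}$; so at least one of $x,y$ is matched internally inside $K_{e(l),d(l)}$. I then need to argue I can choose the internal perfect matching of the biclique so that, whenever only one of the $B(l)$-vertices' partners is free, $z$ in particular gets matched to $x$ or $y$ — this is possible because a balanced bipartite structure with $|B(l)| = e(l)$ vertices each adjacent (originally) to all of $A(l)$, restricted to a set of $\ge e(l)$ internally-matched $A(l)$-vertices that includes at least one of $x,y$, admits by Hall's theorem a perfect matching of $B(l)$ in which $z$ is matched to one of $x,y$. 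I would spell this Hall-type selection argument out as the one genuinely new ingredient beyond Tutte's proof. The remaining verification — that the rest of $H$ is matched exactly as in Tutte's reduction — is routine and can be cited.

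The main obstacle I anticipate is precisely that last point in the converse: ensuring the local matching inside each biclique $K_{e(l),d(l)}$ can be chosen to place $z$'s partner among $\{x,y\}$ simultaneously with all the global constraints (the external edges are dictated by $F$, and the internal matching must be perfect on $B(l)$ and cover exactly the $d(l)-f(l)$ unused $A(l)$-vertices). One must check there is no conflict across different $l \in S$ — but since the bicliques for distinct vertices are vertex-disjoint in $H$ and the choice is entirely internal to each biclique, the selections are independent, and a per-biclique Hall argument suffices. Everything else transfers directly from the proof in \citet{DW00}, which is why the paper states the theorem without proof.
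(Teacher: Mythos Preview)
Your proposal is correct and matches the approach the paper indicates: the paper explicitly omits the proof, stating that it ``is exactly on the lines of Tutte's proof as presented in the book by \citet{DW00} (page 141),'' and your argument is precisely that---Tutte's bijection between perfect matchings of $H$ and $f$-factors of $G$, together with the observation that restricting $z\in B(l)$ to the two neighbours $x,y$ forces at least one of $\{u,l\},\{l,v\}$ out of the corresponding factor, while in the converse direction the internal biclique matching can always send $z$ to whichever of $x,y$ is free. The only cosmetic remark is that your ``Hall-type selection'' is in fact immediate: once $z$ is matched to one of $x,y$, the remaining $e(l)-1$ vertices of $B(l)$ are complete to $A(l)$ and can be assigned bijectively to the remaining $e(l)-1$ internally-matched $A(l)$-vertices without invoking Hall.
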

\noindent
\begin{Theorem}
Let $G\in$ \FACTOR. Then algorithm $A_f$ outputs a connected $f$-factor of $G$ in polynomial time.
\label{3diafact}
 \end{Theorem}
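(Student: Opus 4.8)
## Proof Proposal for Theorem \ref{3diafact}

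The plan is to argue that Algorithm $A_f$ is correct and runs in polynomial time by tracking what happens in each of its branches. First I would dispose of the easy branches. If Tutte's $f$-factor algorithm fails to find an $f$-factor, then $G$ has no $f$-factor at all, hence no connected $f$-factor, so reporting failure is correct. If it returns a connected $f$-factor, we are done immediately. The only substantive case is when the returned $f$-factor $G'$ has exactly two components $\{X, V \setminus X\}$ (there can be at most two by Fact 1). By Theorem \ref{mainthm}, since $G$ has an $f$-factor $G'$ that is \emph{disconnected}, the contrapositive tells us that $G$ must have a connected $f$-factor of diameter at least 3, provided $G$ has any connected $f$-factor at all; and if $G$ has no connected $f$-factor, Step 4(b) correctly reports failure. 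So I would split the analysis: assume $G$ has a connected $f$-factor, and show Step 4(a) finds one.

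The core of the argument is the correctness of the iteration in Step 4(a). Assume $G$ has a connected $f$-factor $H$. Color $E(G')$ red and $E(H)$ blue; by Fact \ref{rblemma} the symmetric difference decomposes into vertex-disjoint alternating circuits, and since $H$ connects $X$ to $V \setminus X$ while $G'$ does not, at least one minimal alternating circuit $T$ in this symmetric difference is a switch on $G'$ and is incident on $\{X, V\setminus X\}$. By Lemma \ref{switchinglemma}, $G'' = \mathrm{Switching}(G',T)$ is a connected $f$-factor of diameter at least 3; fix any pair $u^\ast, v^\ast$ realizing distance at least 3 in $G''$. By Lemma \ref{diameterlemma}, exactly one of $u^\ast, v^\ast$ lies in $X$ — say $u^\ast \in X$, $v^\ast \in V \setminus X$ — so this pair is among the pairs enumerated in Step 4(a). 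Now I must connect ``distance at least 3'' with the existence of an \emph{induced} path $P_{u^\ast v^\ast}$ of exactly 3 edges serving as a shortest $u^\ast$--$v^\ast$ path: since $G''$ is connected and $d_{G''}(u^\ast,v^\ast)\ge 3$, take a shortest $u^\ast$--$v^\ast$ path in $G''$; if it has length exactly 3 it is induced (a shortest path of length 3 has no chords, as a chord would shorten it) and we are done for that iteration; if it has length $>3$, I would argue — or invoke the structure already used in Lemma \ref{switchinglemma}, where the four-edge vs.\ longer-circuit dichotomy is handled — that one can still exhibit a length-3 induced path between \emph{some} valid pair, or simply observe that Theorem \ref{tutteredn} only needs an $f$-factor in which $u^\ast$ and $v^\ast$ are at distance at least 3, which $G''$ witnesses. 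For the iteration where the routine is called with the correct pair $(u^\ast,v^\ast)$ and (if needed) the correct induced 3-path, Theorem \ref{tutteredn} guarantees that Distance-Constrained-Factor returns some $f$-factor $\widehat G$ in which $d_{\widehat G}(u^\ast,v^\ast)\ge 3$. It remains to show $\widehat G$ is connected: this is exactly Theorem \ref{mainthm} applied contrapositively — $\widehat G$ has diameter at least 3 (since two vertices are at distance $\ge 3$), so it cannot be a disconnected $f$-factor if all connected $f$-factors have diameter $\le 2$; but we are precisely in the case where $G$ has a connected $f$-factor, and one checks that a disconnected $\widehat G$ would contradict the diameter-2 bound on components (Lemma \ref{minauglemma}/Lemma \ref{diameterlemma} machinery: each component of any $f$-factor has more than $n/2.5$ vertices and hence, being too large to have small components, actually any $f$-factor has at most two components each of diameter $\le 2$, so a disconnected $f$-factor has diameter $\infty$, not 3). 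Hence $\widehat G$ is connected, and outputting it in Step 4(a)ii is correct.

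Finally, polynomial running time: Tutte's $f$-factor algorithm runs in polynomial time (reduction to perfect matching on a graph of size $O(\sum_v d(v)) = O(n^2)$). Step 4(a) iterates over $O(n^2)$ pairs $(u,v)$ and, for each, over the induced 3-edge paths between them, of which there are $O(n)$ (one per choice of the two internal vertices, and in fact Algorithm 1 sidesteps enumerating them by directly forbidding all length-2 detours); each call to Redn-PM builds a graph of polynomial size and each perfect-matching computation is polynomial. So the total is polynomial.

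\medskip

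\noindent\textbf{Main obstacle.} The delicate point is the bridge between ``$G''$ has a pair of vertices at distance $\ge 3$'' and ``Step 4(a) tests the right configuration.'' The algorithm enumerates \emph{induced 3-edge paths}, but a shortest path witnessing distance $\ge 3$ could have length $>3$. I expect the resolution is that Algorithm 1's reduction (Redn-PM) only enforces ``distance between $u$ and $v$ is at least 3'' — it forbids the edge $\{u,v\}$ and all length-2 paths — so what Step 4(a) really needs is just a correct \emph{pair}, not a correct 3-path; the phrase ``induced path $P_{uv}$ of 3 edges'' in the algorithm description is then either redundant or a implementation detail, and the correctness hinges only on Lemma \ref{diameterlemma} (which supplies the correct pair) together with Theorem \ref{tutteredn} and Theorem \ref{mainthm}. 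Pinning down this reconciliation cleanly — and confirming that the $f$-factor returned by Theorem \ref{tutteredn} is forced to be connected rather than merely of diameter $\ge 3$ — is where the real care is needed.
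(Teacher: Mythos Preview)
Your high-level structure matches the paper, but the proof has a genuine gap exactly where you flag the ``main obstacle,'' and your proposed resolution of that obstacle is the wrong one.

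You suggest that the enumerated $3$-path $P_{uv}$ may be redundant, since Redn-PM already enforces $d(u,v)\ge 3$, so only the correct \emph{pair} matters. This is false. Distance at least $3$ includes distance $\infty$: if you drop the $3$-path and call Distance-Constrained-Factor on the pair $(u,v)$ alone, the original disconnected $f$-factor $G'$ itself (with $u\in X$, $v\in V\setminus X$) is a perfectly valid output, and the algorithm would happily return it. The role of $P_{uv}$ is precisely to force $u$ and $v$ into the \emph{same} component of the returned factor; the paper implements this by decreasing $f(u),f(v)$ by $1$ and $f(a),f(b)$ by $2$, deleting the edges among $\{u,a,b,v\}$, and only then invoking Redn-PM.

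This same issue breaks your connectedness argument for $\widehat G$. Appealing to Theorem~\ref{mainthm} is circular: we are in the case where some connected $f$-factor has diameter $\ge 3$, so the hypothesis of Theorem~\ref{mainthm} fails and its conclusion says nothing. Your fallback (``each component of a disconnected $f$-factor has diameter $\le 2$, so a disconnected $\widehat G$ has diameter $\infty$, not $3$'') does not yield a contradiction either, because distance $\infty$ still satisfies $d(u,v)\ge 3$. The paper's argument is different and direct: once $P_{uv}\subseteq \widehat G$ forces $u,v$ into one component $Y$, the constraint $d_{\widehat G}(u,v)\ge 3$ gives $N_{\widehat G}(u)\cap N_{\widehat G}(v)=\emptyset$, so $|Y|\ge 2\lceil n/2.5\rceil+2$, leaving fewer than $\lceil n/2.5\rceil+1$ vertices for any second component --- impossible in an $f$-factor. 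That counting step is the missing idea.

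A minor point: your worry about shortest paths of length $>3$ is easily dispatched (take any three consecutive edges on a shortest $u^\ast$--$v^\ast$ path; the sub-path endpoints are at distance exactly $3$, hence by Lemma~\ref{diameterlemma} lie in different parts), so that is not where the difficulty lies.
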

\begin{proof}
 If $G$ does not have a connected $f$-factor, then the algorithm reports this correctly.  
 If all the $f$-factors are connected $f$-factors, then the algorithm is correct, as the first step is the Tutte's $f$-factor algorithm, and it will find a connected $f$-factor. On the other hand if there is an $f$-factor with two components and $G$ has a connected $f$-factor, then we know from Lemma \ref{switchinglemma}  that there exists a connected $f$-factor $G''$ = Switching($G'$, $T$) of diameter at least 3, where $T$ is a minimal alternating-circuit. Further, from Lemma \ref{diameterlemma} we know that any pair of vertices $u,v$ such that distance in $G''$ is 3 are not both in the same component.
 The algorithm $A_f$ enumerates each such candidate path $P$ of length 3 for each pair of vertices, one from $X$ and the other from $V \setminus X$, and, using Distance-Constrained-Factor, checks for an $f$-factor of $G$ containing $P$ as a  shortest path between $u$ and $v$.   To check for an $f$-factor of $G$ containing $P$ as a shortest path, we reduce the $f$ value of $u$ and $v$ by 1,  the values of $a$ and $b$ by 2, remove the edges which are among the vertices of $P$ in $G$, and invoke Distance-Constrained-Factor. If for a path $P=\{u,a,b,v\}$ of length 3 between two vertices $u$ and $v$, an $f$-factor is found by Distance-Constrained-Factor, then we claim that this $f$-factor is indeed a connected $f$-factor in $G$. The proof is by contradiction - Let us assume that the graph is not connected.  Let $Y$ be the component that contains $u$ and $v$ and the path $P$.  Since $P$ is a shortest $u$-$v$ path in the output factor, it follows that $N(v) \cap N(u) = \emptyset$.  Therefore, the number of vertices in the component $Y$ is at least $\frac{2n}{2.5}+2$.  
The number of vertices $V \setminus Y$ is at most $n - \frac{2n}{2.5}-2= \frac{n}{5}-2$. However, this is a contradiction to the fact that in an $f$-factor, each component  has at least $\frac{n}{2.5}+1$ vertices.  Therefore, the $f$-factor found by the algorithm is indeed a connected $f$-factor and consequently, the algorithm is correct.  
Given $G'$, there are at most $n^4$ such paths to enumerate, and Distance-Constrained-Factor is basically a polynomial time perfect matching computation due to Theorem \ref{tutteredn}.  Therefore, $A_f$ runs in polynomial time and decides whether $G$ has a connected $f$-factor. 
\end{proof}

\bibliographystyle{plainnat}
\bibliography{dregular}
\end{document}